\def\@doi#1{\href{https://doi.org/#1}
      {\ttfamily https://doi.org/#1}\egroup}}
\def\@doi#1{\ttfamily https://doi.org/#1\egroup}}
  \def\doi{\bgroup\catcode`\_=12\relax\@doi}}
\authorrunning{É. Lubat et al.}
\tikzset{
  big stealth/.style={
    decoration={markings,mark=at position -(0.1pt) with {\arrow[scale=2*\scale]{stealth}}},
    postaction={decorate},
    shorten >=0.4pt}}
\tikzset{
  big ring/.style={
    decoration={markings,mark=at position -(0.1pt) with {\arrow[scale=1.5*\scale]{o}}},
    postaction={decorate},
    shorten >=8pt*\scale}}
\tikzset{
  big disc/.style={
    decoration={markings,mark=at position -(0.1pt) with {\arrow[scale=1.5*\scale]{*}}},
    postaction={decorate},
    shorten >=8pt*\scale}}
\tikzset{
  big box/.style={
    decoration={markings,mark=at position -(0.1pt) with {\arrow[scale=1.5*\scale]{open square}}},
    postaction={decorate},
    shorten >=8pt*\scale}}
\tikzset{
  big tile/.style={
    decoration={markings,mark=at position -(0.1pt) with {\arrow[scale=1.5*\scale]{square}}},
    postaction={decorate},
    shorten >=8pt*\scale}}
\tikzstyle{place}=[circle, very thick, fill, top color=white, bottom color=white, draw=black, minimum size=40pt, drop shadow]
\tikzstyle{state}=[circle, very thick, fill, top color=white, bottom color=white, draw=black, minimum size=40pt, drop shadow]
\tikzstyle{trans}=[rectangle, very thick, fill, top color=white, bottom color=white, draw=black, minimum size=32pt, drop shadow]
\tikzstyle{arc}=[thick, big stealth, black]
\tikzstyle{read}=[thick, big disc, black]
\tikzstyle{inhibitor}=[thick, big ring, black]
\tikzstyle{stopwatch}=[thick, big tile, black]
\tikzstyle{stopwatchinhibitor}=[thick, big box, black]
\tikzstyle{priority}=[thick, big stealth, orange]
\tikzstyle{enabling}=[thick, big disc, orange]
\tikzstyle{disabling}=[thick, big ring, orange]
\tikzstyle{token}=[circle, fill, draw=black, minimum size=4pt]
\tikzstyle{glob-options}=[label distance=6pt*\scalenodes*\scale,x=1pt,y=-1pt,scale=\scale,every node/.style={transform shape}]
\tikzstyle{virtual}=[circle, draw=white, minimum size=20pt]
\newcolumntype{Y}{>{\hfill\arraybackslash}X}
\newcolumntype{C}{>{${}}c<{{}$}}
\newcolumntype{L}{>{${}}l<{{}$}}
\newcolumntype{R}{>{${}}r<{{}$}}
\newcommand{\vect}[1]{\ensuremath{\mathbf{#1}}}
\newcommand{\Pre}{\vect{Pre}} 
\newcommand{\Post}{\vect{Post}}
\newcommand{\Lab}{\ensuremath{\mathcal{L}}\xspace}
\newcommand{\TPN}[0]{{TPN}\xspace}
\newcommand{\TTPN}[0]{{PTPN}\xspace}
\newcommand*{\dotleq}{\mathrel{\dot{\leqslant}}}
\newcommand*{\dotgeq}{\mathrel{\dot{\geqslant}}}
\newcommand{\arel}{\mathrel{\mathcal{R}}}
\newcommand{\SG}{{{S}}}
\newcommand{\Nat}{\ensuremath{\mathbb{N}}\xspace}
\newcommand{\pRat}{\ensuremath{\mathbb{Q}_{\ge 0}}\xspace}
\newcommand{\Itrv}{\ensuremath{\mathbb{I}\xspace}}
\newcommand{\proj}[1]{\mathop{\mathtt{\scalebox{0.8}{\raisebox{0.4ex}{\#}}}{_#1}}}
\newcommand{\dotminus}{\mathbin{\text{\@dotminus}}}
\newcommand{\@dotminus}{%
  \ooalign{\hidewidth\raise1ex\hbox{.}\hidewidth\cr$\m@th-$\cr}%
}
\def \rightarrowfill{\m@th\mathord{\smash-}\mkern-6mu%
  \cleaders\hbox{$\mkern-2mu\mathord{\smash-}\mkern-2mu$}\hfill
  \mkern-6mu\mathord\rightarrow}
\def \Rightarrowfill{\m@th\mathord{\smash-}\mkern-6mu%
  \cleaders\hbox{$\mkern-2mu\mathord{\smash-}\mkern-2mu$}\hfill
  \mkern-6mu\mathord\Rightarrow}
\def \rightarrowfill{\m@th\mathord{\smash-}\mkern-6mu%
  \cleaders\hbox{$\mkern-2mu\mathord{\smash-}\mkern-2mu$}\hfill
  \mkern-6mu\mathord\rightarrow}
\def \Rightarrowfill{\m@th\mathord{\smash=}\mkern-6mu%
  \cleaders\hbox{$\mkern-2mu\mathord{\smash=}\mkern-2mu$}\hfill
  \mkern-6mu\mathord\Rightarrow}
\def \midrightarrowfill{\m@th\mathord{\smash{\raisebox{.2ex}{$\scriptscriptstyle\mid$}}\!\!\,-}\mkern-6mu%
  \cleaders\hbox{$\mkern-2mu\mathord{\smash-}\mkern-2mu$}\hfill
  \mkern-6mu\mathord\rightarrow}
\def \midRightarrowfill{\m@th\mathord{\smash{\raisebox{.1ex}{$\scriptstyle\mid$}}\!\!\!=}\mkern-6mu%
  \cleaders\hbox{$\mkern-2mu\mathord{\smash=}\mkern-2mu$}\hfill
  \mkern-6mu\mathord\Rightarrow}
\newcommand{\overstackrel}[2]{\mathrel{\mathop{#1}\limits^{#2}}}
\newcommand{\trans}[1]{\mathbin{\smash[t]{\overstackrel{\rightarrowfill}{\ #1\ }}}}
\newcommand{\wtrans}[1]{\mathbin{\smash[t]{\overstackrel{\Rightarrowfill}{\ #1\ }}}} 
\newcommand{\dwtrans}[1]{\mathbin{{\overstackrel{\Rightarrowfill}{\ #1\ }}}} 
\newcommand{\ptrans}[1]{\mathbin{\smash[t]{\overstackrel{\midrightarrowfill}{\raisebox{-0.4ex}[0ex][-0.4ex]{$\scriptstyle\ #1\ $}}}}}
\newcommand{\dptrans}[1]{\mathbin{{\overstackrel{\midrightarrowfill}{{\raisebox{-0.4ex}[0ex][-0.4ex]{$\scriptstyle\ #1\ $}}}}}}
\newcommand{\interp}[1]{{[\![} {#1} {]\!]}}
\newcommand{\TT}[0]{\mathscr{T}_1}
\newcommand{\tI}[1][]{\ensuremath{\varphi_{#1}}}
\newcommand{\sI}[1]{\ensuremath{x_{#1}}}
\newcommand{\E}{{\cal E}}
\newcommand{\SIF}{\ensuremath{{\bf I}_s}\xspace}
\def\eqdef{\mathrel{\smash{\stackrel{{\scriptscriptstyle\mathrm{def}}}{=}}}}
\begin{document}
\title{A State Class Construction for Computing the Intersection of
  Time Petri Nets Languages}
\titlerunning{Computing the Language Intersection of TPN}
% If the paper title is too long for the running head, you can set
% an abbreviated paper title here
%
\author{Éric Lubat\inst{1} \and Silvano {Dal
    Zilio}\inst{1} \and\\ Didier Le
  Botlan\inst{1} \and Yannick Pencolé\inst{1} \and Audine
  Subias\inst{1}}
\pagestyle{headings}
% First names are abbreviated in the running head.
% If there are more than two authors, 'et al.' is used.
%
\institute{LAAS-CNRS, Université de Toulouse, CNRS, INSA,
  Toulouse, France\\ \email{name.surname@laas.fr}}
%%
%%\sloppy

\maketitle
\thispagestyle{electronic}

%
%% UNCOMMENT BEFORE SENDING
%%\linenumbers
%
\begin{abstract}
  We propose a new method for computing the language intersection of
  two Time Petri nets (TPN); that is the sequence of labels in timed
  traces common to the execution of two TPN. Our approach is based on
  a new product construction between nets and relies on the State
  Class construction, a widely used method for checking the behaviour
  of TPN. We prove that this new construct does not add additional
  expressive power, and yet that it can leads to very concise
  representation of the result. We have implemented our approach in a
  new tool, called {Twina}. We report on some experimental results
  obtained with this tool and show how to apply our approach on two
  interesting problems: first, to define an equivalent of the
  twin-plant diagnosability methods for TPN; then as a way to check
  timed properties without interfering with a system.  \keywords{Time
    Petri nets \and Model Checking \and State Classes \and Realtime
    Systems Modeling and Verification.}
\end{abstract}

\section{Introduction}
\label{sec:introduction}

Formal languages, and the problem of efficiently checking intersection
between languages, play an important role in formal verification. For
instance, automata-theoretic approaches to model-checking often boils
down to a language emptiness problem; that is finding whether there is
a trace, in a system, that is also ``in the negation of a
property''~\cite{kupferman2000automata}. Similarly, in the study of
Discrete Event Systems~\cite{ramadge1989control}, basic
control-theoretic properties are often expressed in terms of language
properties and language composition. We consider examples of these two
problems at the end of this paper.

% \enlargethispage{\baselineskip}
In this context, there is a large body of research where systems are
expressed using Petri nets (PN). Indeed, PN are well-suited for
modelling notions such as concurrency or causality in a very compact
way; and they can be used for verification by building a {Labeled
  Transition System} out of them. Just as important, PN come equipped
with a structural construct for \emph{synchronous composition}, that
coincides with language intersection when the set of labels of the
nets are equal. Unfortunately, the situation is not as simple when we
consider extensions of Petri nets that deal with time.

In this paper, we propose a new method for computing the language
intersection of two \emph{Time Petri nets}
(TPN)~\cite{Merlin74astudy,BPV06}.  This problem is quite complex and
is hindered by two main problems. First, the state space associated
with a \TPN is typically infinite when we work with a dense time
model; that is when time delays can be arbitrarily small. Therefore we
need to work with an abstraction of their transition system. Second,
there is no natural way to define the (structural) composition of two
transitions that have non-trivial time constraints (meaning different
from the interval $[0,\infty[$). These problems limit the possibility
for compositional reasoning on \TPN.

A solution to the first problem was proposed by Berthomieu and
Menasche in~\cite{berthomieu1983enumerative}, where they define a
state space abstraction based on \emph{state classes}.  This approach
is used in several model-checking tools, such as
Romeo~\cite{gardey2005romeo} and Tina~\cite{BRV04} for instance.  In
the following, we propose a simple solution to overcome the second
problem. Our approach is based on an extension of \TPN with a
dedicated product operator, called \emph{Product \TPN}, that can be
viewed as an adaptation of Arnold-Nivat synchronization
product~\cite{arnold_nivats_2002} to the case of \TPN. We show that it
is possible to extend the state class construction to this new
extension, which gives an efficient method for computing the
intersection of two \TPN when the nets are bounded.\\

%%\subsection{Verification of Time Petri Nets}
\noindent\textbf{Verification of Time Petri Nets.}
In the following, we consider \TPN where transitions may have
observable labels.  In this context, an \emph{execution} is the
timed-event word obtained by recording the transitions that have been
fired together with the delays between them.  Our goal is to provide a
method for symbolically computing the set of executions that are
{common} to two labeled \TPN.  Without time, it is well-known that we
can compute the language of a net from its marking graph. This gives a
Labeled Transition System (LTS); an automaton that is finite as soon
as the net is bounded. Likewise, we can compute the (language)
intersection of two timed nets by computing the LTS of their
\emph{synchronous composition}, denoted $N_1 \| N_2$
thereafter. Actually, like in the untimed case, we are more interested
by the \emph{synchronous product} of two languages, rather than by
their intersection.

The situation is quite different when we take time into
account. Indeed, we may have fewer traces with a \TPN than with the
corresponding, ``untimed'' net (the one where timing constraints are
deleted). This is because timing constraints may prevent a transition
from firing, but never enable it. One solution to recover a finite
abstraction of the state space is to use the \emph{State Class Graph}
(SCG) construction. Actually, SCG is an umbrella term for a family of
different abstractions, each tailored to a different class of
properties, or to a different extension of \TPN. The first such
construction, called \emph{Linear State Class Graph}
(LSCG)~\cite{berthomieu1983enumerative}, is based on \emph{firing
  domains}, that is the delays before a transition can fire.  The LSCG
preserves the set of reachable markings of a net as well as its
language; which is exactly what is needed in our case. This is also
the construction that we use in Sect.~\ref{sec:2TPN}.

In the following we also mention the \emph{Strong SCG} construction
(SSCG)~\cite{BPV07}, based on \emph{clock domains}, that is the
duration for which a transition has been enabled.  The SSCG preserves
more information than the linear one. For example, we can infer from
clocks when two transitions are enabled ``at the same time'', meaning
we can handle priorities. The added expressiveness of the strong
construction comes at a cost; the SSCG (for a given net) has always
more classes than the corresponding LSCG, sometimes by a very large
amount. (We give some examples of this in
Sect.~\ref{sec:experimental-results}.)  This is why we prefer to use
the LSCG when possible.\\

%%\subsection{Related Works and Review of Existing Methods}
\noindent\textbf{Related Works and Review of Existing Methods.}
A motivation for our work is that we cannot rely on a synchronous
product of \TPN. Indeed, a major limitation with \TPN is that there
are no sensible way to define the composition of ``non-trivial''
transitions, and therefore no sensible way to define the synchronous
composition of ``non-composable'' \TPN; we say that \emph{a transition
  is trivial} when it is associated to the time interval $[0, \infty[$
and that a net is \emph{composable} when all its observable
transitions are trivial. (We illustrate the problem at the end of
Sect.~\ref{sec:TPN}).  Likewise we cannot rely on the product of their
SCG either. Indeed, the product of two SCG provides an
over-approximation of the expected result, since it cannot trace time
dependencies between events from different nets.

The situation is not the same with other ``timed models''. A notable
example is \emph{Timed Automata}~\cite{alur1994theory}, an extension
of finite automata with variables, also called clocks, whose values
progress synchronously as time elapses. Timed Automata (TA) can use
boolean conditions on clocks to guard transitions and as local
invariants on states. It is also possible to reset a clock when
``firing'' a transition. The classical product operation on finite
automata can be trivially extended to TA: we only need to use the
conjunction of guards, invariants and resets where needed. This
provides a straightforward method for computing the (language)
intersection of two TA, and also a trivial proof that the class of
languages accepted by a TA are closed under intersection. Another
related work is based on the definition of \emph{Timed Regular
  Expressions}~\cite{asarin2002timed}, that provides a timed analogue
of Kleene Theorem for TA.

These results seem to promote Timed Automata as an algebraic model of
choice for reasoning about timed words, and many works have studied
the relation between \TPN and TA. (On another note, we can remark that
even a slight change in semantics may complicate the product
construction; see for instance the case with signal-event
languages~\cite{berard_intersection_2006}.) For instance, Cassez and
Roux~\cite{cassez2006structural} propose a structural encoding of \TPN
into TA that preserves the semantics in the sense of timed
bisimulation, and therefore that preserves timed language
acceptance. This encoding generates one automata and one clock for
every transition in the \TPN and can be extended in order to
accommodate strict timing constraints; that is static time intervals
that have a finite, open bound. Later, Bérard et
al.~\cite{berard2005comparison,berard-TCS-08} showed that \TPN and TA
are indeed equivalent with respect to language acceptance, but that TA
are strictly more expressive in terms of weak timed bisimulation
($\approx$). These results are based on semantic encodings from \TPN
into TA and from TA into \TPN that can be chained together to build an
encoding from a \TPN to an equivalent composable one. A similar result
is also found in~\cite{refIPTPN}, which provides a structural encoding
from a \TPN, $N$, into a composable \TPN that is of size linear with
respect to $N$. But none of these encodings handle timing constraints
that are bounded and right-open.

One of the main difference between TA and \TPN is that, with TA, we
can loose the ability to fire a transition just by waiting long enough
(until some guards become false). The same behaviour can be observed
with \TPN when we add a notion of priorities. In particular,
Berthomieu et al.~\cite{BPV06,BPV07} prove that (bounded) \TPN with
priorities are very close to TA, in the sense of $\approx$. They also
define an extension of \TPN~\cite{refIPTPN} with \emph{inhibitor arcs}
between transitions (similar to priorities) and a dual notion of
\emph{permission arcs}. In this extension, called IPTPN, a net can
always be transformed into a composable one. (We show an example of
this construction in Sect.~\ref{sec:expressiveness}).

All these results can be used to define three different methods for
computing the intersection of \TPN.  {A first method} is to use the
structural translation from \TPN to TA given
in~\cite{cassez2006structural} and then to use the product
construction on TA. This encoding is at the heart of the tool
Romeo~\cite{gardey2005romeo} and has been used to build a TCTL
model-checker for \TPN (which, incidentally, relies on the ``product''
of a net with observers for the formulas). Unfortunately, to the best
of our knowledge, it is not possible to analyse the product of two
nets with Romeo and therefore we have not been able to experiment with
this method.  Moreover, this approach is closer in spirit to the SSCG
construction.

{A second method} is to use the (combination of) encodings defined
in~\cite{berard-TCS-08} to replace a \TPN with an equivalent,
composable one.  Unfortunately, this construction relies on a semantic
encoding that requires the computation of the entire symbolic state
space of the net, and is only applicable on net that have closed
timing constraints; meaning that we cannot use constraints of the form
$[l, h[$ for example. While this method is not usable in practice, it
could be used to prove expressiveness results. For example, it gives a
proof that the set of TPN with closed timing constraints is closed
under intersection; something we silently admitted until now.

{A third method} also relies on generating composable nets as a
preprocessing step. In this case, the idea is to use the IPTPN
of~\cite{refIPTPN}. Like in the first method, the main drawback of
this approach is that we need to use the strong SCG construction,
which means that we could compute much more classes than with a method
based on the LSCG.  We describe the experimental results obtain with
this
method in Sect.~\ref{sec:experimental-results}.\\

%%\subsection{Outline of the Paper and Contributions}
\noindent\textbf{Outline of the Paper and Contributions.}
In the next section we define the semantics of \TPN and provide the
technical background necessary for our work.  Section~\ref{sec:2TPN}
contains the semantics of Product \TPN, while our two main results are
given in Sect.~\ref{sec:state-class-abstr}
and~\ref{sec:expressiveness}, where we show that it is possible to
extend the State Class Graph construction to the case of Product \TPN
and that this extension does not add additional expressiveness
power. By construction, our method can be applied even when the \TPN
are not bounded and without any restrictions on the timing
constraints.

We have implemented our approach in a new tool, called
Twina~\cite{twina2019} Before concluding, we report
(Sect.~\ref{section:Twin}) on some experimental results obtained with
this tool. We also show some practical applications for our approach
on two problems: first, to define an equivalent of the twin-plant
diagnosability methods for \TPN; then as a way to check timed
properties without interfering with a system.

\section{Time Petri Nets and other Technical Background}
\label{sec:TPN}

A {\em Time Petri Net} (\TPN) is a net where each transition, $t$, is
decorated with a (static) time interval $ \SIF(t)$ that constrains the
time at which it can fire. A transition is enabled when there are
enough tokens in its input places. Once enabled, transition $t$ can
fire if it stays enabled for a duration $\theta$ that is in the
interval $\SIF(t)$. In this case, $t$ is said \emph{time enabled}.

A \TPN is a tuple $\langle {P},{T},{\Pre},{\Post},m_0,\SIF \rangle$ in
which: $\langle {P},{T},{\Pre},{\Post} \rangle$ is a net (with ${P}$
and ${T}$ the set of places and transitions);
${\Pre},~ {\Post} : {T} \rightarrow {P} \rightarrow \Nat$ are the
precondition and postcondition functions; $m_0 : P \rightarrow \Nat$
is the initial marking; and $\SIF : {T} \rightarrow \Itrv$ is the
\emph{static interval function}. We use $\Itrv$ for the set of all
possible time intervals. To simplify our presentation, we only
consider the case of closed intervals of the form $[l, h]$ or
$[l, +\infty[$, but our results can be extended to the general case.
TPN can be \emph{k-safe}, which means the net has at most $k+1$
reachable markings. We say that a TPN is \emph{safe} when it is
1-safe.

We consider that transitions can be tagged
using a countable set of labels, $\Sigma = \{a, b, \dots\}$. We also
distinguish the special constant $\epsilon$ (not in $\Sigma$) for
internal, silent transitions. In the following, we use a global
labeling function $\Lab$ that associates a unique label in
$\Sigma \cup \{\epsilon\}$ to every transition\footnote{We may assume
  that there is a countable set of all possible transitions
  (identifiers) and that different nets have distinct
  transitions.}. The alphabet of a net is the collection of labels (in
$\Sigma$) associated to its transitions.\\

\noindent\textbf{A Semantics for TPN Based on Firing Domains.}
A \emph{marking} $m$ of a net $\langle {P},{T},{\Pre},{\Post} \rangle$
is defined as a function $m : {P} \rightarrow \Nat$ from places to
natural numbers. A transition $t$ in ${T}$ is {\em enabled} at $m$ if
and only if $m \dotgeq \Pre(t)$ (we use the pointwise comparison between
functions) and ${\cal E}(m)$ denotes the set of transitions enabled at
$m$.

A \emph{state} of a {\TPN} is a pair $s = (m, \tI)$ in which $m$ is a
marking, and $\tI: {T} \to \Itrv$ is a mapping from transitions to
time intervals, also called \emph{firing domains}. Intuitively, if $t$
is enabled at $m$, then $\tI(t)$ contains the dates at which $t$ can
possibly fire in the future. For instance, when $t$ is newly enabled,
it is associated to its static time interval $\tI(t) =
\SIF(t)$. Likewise, a transition $t$ can fire immediately only when
$0$ is in $\tI(t)$ and it cannot remain enabled for more than its
timespan, {\it i.e.} the maximal value in $\tI(t)$.

For a given delay $\theta$ in $\pRat$ and $\iota$ in $\Itrv$, we denote $\iota - \theta$ the
time interval $\iota$ shifted (to the left) by $\theta$:, e.g.
$[l,h] - \theta = [\max(0, l-\theta), \max(0, h-\theta)]$. By
extension, we use $\tI \dotminus \theta$ for the partial function that
associates the transition $t$ to the value $\tI(t) - \theta$. This
operation is useful to model the effect of time passage on the enabled
transitions of a net.

The following definitions are quite standard, see for
instance~\cite{berard2005comparison,BPV06}. The semantics of a \TPN is
a (labeled) Kripke structure $\langle S,S_0,\rightarrow\rangle$ with
only two possible kinds of actions: either $s \trans{a} s'$, meaning
that the transition $t \in T$ is fired from $s$ with $\Lab(t) = a$; or
$s \trans{\theta} s'$, with $\theta \in \pRat$, meaning that time
$\theta$ elapses from $s$. A transition $t$ can fire from the state
$(m,\tI)$ if $t$ is enabled at $m$ and firable instantly. When we fire
a transition $t$ from state $(m, \tI)$, a transition $k$ (with
$k \neq t$) is said to be \emph{persistent} if $k$ is also enabled in
the marking $m - \Pre(t)$, that is if $m - \Pre(t) \dotgeq
\Pre(k)$. The other transitions enabled after firing $t$ are called
\emph{newly enabled}.

\begin{sloppypar}
  \begin{definition}[Semantics]\label{def:tpnstate}
    The semantics of a {\TPN} $N$, with $N$ the net
    $\langle {P}, {T}, {\Pre}, {\Post}, m_0, \SIF \rangle$, is the
    Timed Transition System (TTS)
    $\interp{N} = {\langle \SG, s_0, \rightarrow \rangle}$ where $S$
    is the smallest set containing $s_0$ and closed by $\trans{}$,
    where:
    \begin{itemize}
    \item $s_0 = (m_0, \tI[0])$ is the initial state, with $m_0$ the
      initial marking and $\tI[0](t) = \SIF(t)$ for every $t$ in
      $\E(m_0)$;
    \item the state transition relation
      ${\rightarrow} \subseteq S \times ({{\Sigma} \cup \{\epsilon\} \cup \pRat})\! \times S$
      is the relation such that for all state $(m, \tI)$ in $\SG$:
      \begin{itemize}
      \item[(i)] if $t$ is enabled at $m$, $\Lab(t) = a$ and
        $0 \in \tI(t)$ then $(m, \tI) \trans{a} (m', \tI')$ where
        $m' = m - \Pre(t) + \Post(t)$ and $\tI'$ is a firing function
        such that $\tI'(k) = \tI(k)$ for any persistent transition and
        $\tI'(k) = \SIF(k)$ elsewhere.% (TRA)
      \item[(ii)] if $\theta \dotleq \tI$ 
      \\
      $\forall k \in Enabled(m) , \theta \leq max \tI(k)$ then
        $(m,\tI) \trans{\theta} (m,\tI \dotminus \theta)$.% (TIM)
      \end{itemize}
    \end{itemize}
    \label{def:sg}
  \end{definition}
\end{sloppypar}

Transitions in the case $(i)$ above are called \emph{discrete
  transitions}; those labelled with delays (case $(ii)$) are the
\emph{continuous}, or time elapsing, transitions.
Like with nets, we say that the alphabet of a TTS is the set of
labels, in $\Sigma$, associated to discrete actions. Using labels, we
can define the {product of two TTS} by extending the classical
definition for the product of finite automata.
\begin{definition}[Product of TTS]\label{def:ttsprod}
  Assume $S_1 = \langle \SG_{1},s^0_1,\rightarrow_1\rangle$ and
  $S_2 = \langle \SG_{2},s^0_1,\,\allowbreak \rightarrow_2\rangle$ are
  two TTS with respective alphabets $\Sigma_1$ and $\Sigma_2$. The
  product of $S_1$ by $S_2$ is the TTS
  $S_1 \| S_2 = \langle (\SG_1 \times \SG_2), (s^0_1, s^0_2), \trans{}
  \rangle$ such that $\trans{}$ is the smallest relation obeying the
  following rules:
  \begin{mathpar}
    \inferrule{s_1 \trans{\alpha}_1 s'_1\\ \alpha \in (\Sigma_1 \setminus
      \Sigma_2) \cup \{ \epsilon\}} {(s_1, s_2) \trans{\alpha} (s_1',
      s_2)}

    \inferrule{s_2 \trans{\alpha}_2 s'_2\\ \alpha \in (\Sigma_2 \setminus
      \Sigma_1) \cup \{ \epsilon\}} {(s_1, s_2) \trans{\alpha} (s_1,
      s'_2)}

    \inferrule{s_1 \trans{\alpha}_1 s'_1\\ s_2 \trans{\alpha}_2 s'_2\\
      \alpha \neq \epsilon} {(s_1, s_2) \trans{\alpha} (s'_1, s'_2)}
  \end{mathpar}
\end{definition}

\noindent\textbf{Executions, Traces and Equivalences.}
An \emph{execution} of a net $N$ is a sequence in its semantics,
$\interp{N}$, that starts from the initial state. It is a time-event
word over the alphabet containing both labels (in
$\Sigma \cup \{\epsilon\}$) and delays.  Continuous transitions can
always be grouped together, meaning that when
$(m, \tI) \trans{\theta} (m, \tI')$ and
$(m, \tI') \trans{\theta'} (m, \tI'')$ then necessarily
$(m, \tI) \trans{\theta + \theta'} (m, \tI'')$ (and the firing domain
$\tI'$ is uniquely defined from $\tI$ and $\theta$). Based on this
observation, we can always consider executions of the form
$\sigma \eqdef \theta_0\, a_0\ \theta_1\, a_1\, \dots$ where each
discrete transition is preceded by a single time delay.
By contrast, a \emph{trace} is the untimed word obtained from an
execution when we keep only the discrete actions. Then the language of
a \TPN is the set
of all its (finite) traces.

By definition, the language of a \TPN is prefix-closed; and it is
regular when the net is bounded~\cite{berthomieu1983enumerative}. It
is also the case~\cite{refIPTPN} that the ``intersection'' of two nets
$N_1$ and $N_2$---the traces obtained from (pairs of) executions
common to the two nets---are exactly the traces in the TTS product
$\interp{N_1} \mathbin{\|} \interp{N_2}$. Our goal, in the next
section, is to define a product operation, $N_1 \times N_2$, that is a
\emph{congruence}, meaning that $\interp{N_1 \times N_2}$ should be
equivalent to $\interp{N_1} \mathbin{\|} \interp{N_2}$.

Language equivalence would be too coarse in this context. In this
paper, we will instead prefer (a weak version of) timed bisimulation,
which rely on a weak version of the transition relation
$s \wtrans{\alpha} s'$ (with $\alpha$ an action in
$\Sigma \cup \{\epsilon\} \cup \pRat$ and $\theta$ a delay in $\pRat$)
defined from the following set of rules:
\begin{mathpar}
  \inferrule{{\ }}
  {s \dwtrans{\epsilon} s}

  \inferrule{s \wtrans{\epsilon} s'\\ s' \trans{\alpha} s''\\
  s'' \wtrans{\epsilon} s'''}
  {s \dwtrans{\alpha} s'''}

  \inferrule{s \wtrans{\theta} s'\\
    s' \wtrans{\theta'} s''} {s \dwtrans{\theta + \theta'} s''}
\end{mathpar}

\begin{definition}[Behavioural Equivalence]\label{def:bisim}
  Assume $G_1 = \langle \SG_{1},s^0_1,\rightarrow_1\rangle$ and
  $G_2 = \langle \SG_{2},s^0_2,\rightarrow_2\rangle$ are two TTS. A
  binary relation $\arel$ over $\SG_{1} \times \SG_{2}$ is a weak
  timed bisimulation if and only if $s^0_1 \arel s^0_2$ and for all
  actions $\alpha$ and pair of states $(s_1, s_2) \in \,\arel$ we
  have: (1) if $s_1 \wtrans{\alpha} s'_1$ then there exists $s'_2$
  such that $s_2 \wtrans{\alpha} s'_2$ and $s'_1 \arel s'_2$ ; and
  conversely (2) if $s_2 \wtrans{\alpha} s'_2$ then there exists
  $s'_1$ such that $s_1 \wtrans{\alpha} s'_1$ and $s'_1 \arel
  s'_2$. In this case we say that $G_1$ and $G_2$ are timed bisimilar,
  denoted $G_1 \approx G_2$, and we use $\approx$ for the union of all
  timed bisimulations $\arel$.
\end{definition}

Timed bisimulation is preserved by product~\cite{refIPTPN}, meaning
that for all TTS $G, G_1$ and $G_2$ we have $G_1 \approx G_2$ implies
$(G \| G_1) \approx (G \| G_2)$. In the following we say that two nets
are bisimilar, denoted $N_1 \approx N_2$, when
$\interp{N_1} \approx \interp{N_2}$.\\

\begin{figure}[t]
\begin{center}
  \subfloat[$N_1$]{\label{fig:R}\includegraphics[scale=0.55]{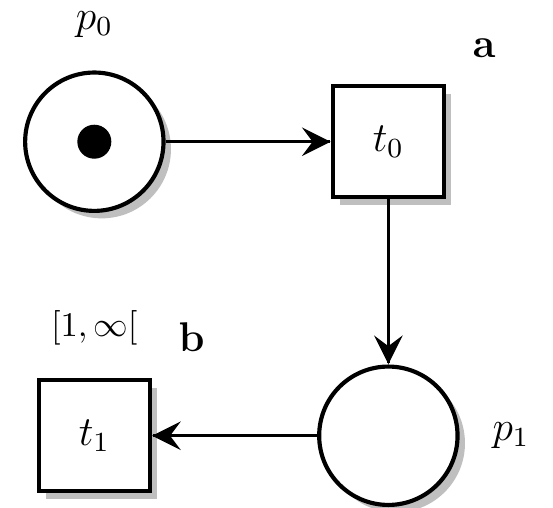}}
\quad
\vrule
\quad
\subfloat[$N_2$]{\label{fig:L}\includegraphics[scale=0.55]{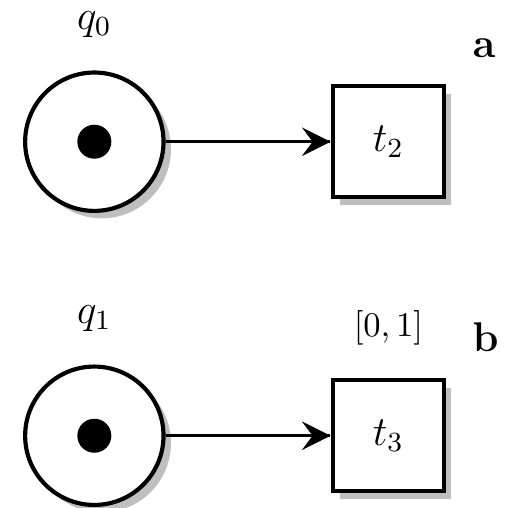}}
\quad
\vrule
\quad
\subfloat[``untimed'' product]{\label{fig:RL}\includegraphics[scale=0.55]{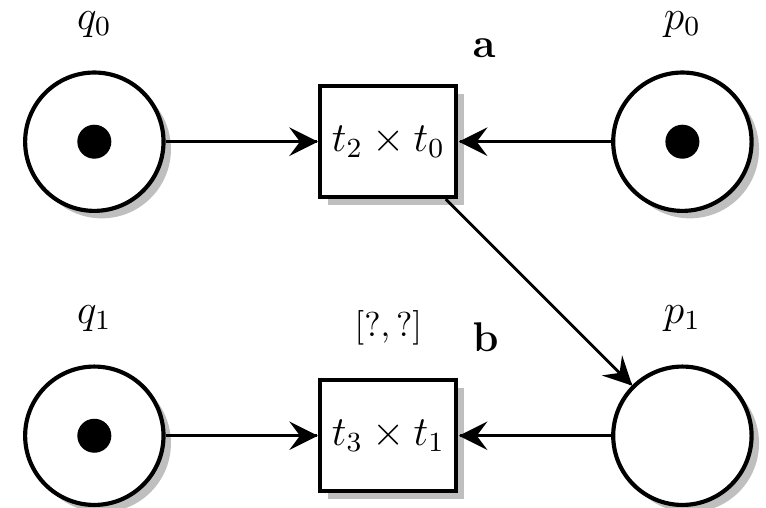}}
\end{center}
\caption{Two examples of \TPN and their (failed, untimed) product.}
\label{Figure-exeTPN}
\end{figure}

\noindent\textbf{Example.}
We give two examples of \TPN with alphabet $\{a,b\}$ in
Fig.~\ref{Figure-exeTPN}.  Executions for the net $N_1$ (left) include
time-event words of the form $\theta_0\, a\, \theta_1\, b$ (and their
prefix) provided that $\theta_1 \geq 1$. Executions for the net $N_2$ 
(middle) include time-event words of the form $\theta_2\, a\, \theta_3\, b$ and
$\theta_3\, b\, \theta_2\, a$ (and their prefix) provided that $\theta_3 \leq 1$.
If we consider executions that are in the product of both nets, we find all 
executions of the form $\theta_0\, a$, with the constraint $\theta_0 \leq 1$. We also
have one execution of the form $\theta_0\, a\, \theta_1\, b$ provided
that $\theta_0 + \theta_1 \leq 1$ and $\theta_1 \geq 1$. This
corresponds to the case where event $a$ fires exactly at date $0$; any
other case eventually leading to a \emph{time deadlock} (a situation
where time cannot progress). In the same figure (right), we display
the ``untimed'' synchronous product $N_1 \| N_2$. It is clear that
there are no possible choice of time constraint for transition
$t_3 \times t_1$ that could lead to a net bisimilar to
$\interp{N_1} \mathbin{\|} \interp{N_2}$. This is a simple example of
the ``non-composability'' of Time Petri nets.

%%%%%%%%%%%%%%%%%%%%%%%%%%%%%%%%%%%%%%%%%%%%%%%%%%%%%%%%%%%%%%%%%%%%%%

\section{Product TPN and their Semantics}
\label{sec:2TPN}

We propose an extension of \TPN with a \emph{synchronous product}
operation between \TPN, $\times$, in the style of Arnold-Nivat
synchronization of processes~\cite{arnold_nivats_2002}. Our goal is to
obtain a congruent composition operator, in the sense that
$\interp{N_1 \times N_2} \approx \interp{N_1} \mathbin{\|}
\interp{N_2}$.
A \emph{product \TPN}, or \TTPN, is a \TPN
$\langle {P},{T},{\Pre},{\Post},\SIF \rangle$ augmented with two
projections, $\proj{1}$ and $\proj{2}$, such that the following
properties hold:
\begin{itemize}
\item there are two sets $\proj{1}P$ and $\proj{2}P$ that partition the set of
  places $P$.
\item there are two sets $\proj{1}T$ and $\proj{2}T$ that partition the set of
  transitions $T$.
\item all the pre- and post-conditions of a transition in $\proj{i}T$
  are places in $\proj{i}P$: if $t \in \proj{i}T$ and $\Pre(t)(p) > 0$
  or $\Post(t)(p) > 0$ then $p \in \proj{i}P$.
\end{itemize}
Basically, this means that a \TTPN $N$ is the superposition of two
distinct, non-interconnected components, that we call $\proj{1} N$ and
$\proj{2} N$ for short.

\begin{definition}[Product of \TPN]\label{def:ptpnprod}
  The product $N_1 \times N_2$ of two disjoint \TPN $N_1$ and $N_2$
  (such that $P_1 \cap P_2 = T_1 \cap T_2 = \emptyset$) is the \TTPN
  obtained from the juxtaposition, preserving labels, of $N_1$ and
  $N_2$ with the two trivial projections $\proj{i}P = P_i$ and
  $\proj{i}T = T_i$ for all $i \in 1..2$.
\end{definition}

With our notations, a \TTPN $N$ is equivalent to the composition
$(\proj{1} N) \times (\proj{2} N)$.
In the following, we use the notation $\proj{i} m$ to denote the
restriction of a marking $m$ to the places in $\proj{i}P$ and
similarly with $\proj{i}\varphi$ and the transitions in
$\proj{i}T$. By convenience, $\proj{i}(m,\tI)$ denotes the state
$(\proj{i}m,\proj{i}\tI)$ and we use $\proj{i}\Sigma$ for the alphabet
of net $\proj{i} N$.

To ease the presentation, we limit the composition to only two
components (instead of a sequence) and we do not define the equivalent
of ``synchronization vectors''. As a result, we do not define the
product over \TTPN. This could be added, at the cost of more
burdensome notations, but it is not needed in our applications
(Sect.~\ref{section:Twin}). This is also why we have the same
limitations in our implementation~\cite{twina2019}.

Labels are not necessarily partitioned, so the same label can be
shared between the two components of a product. We denote
$\Sigma_{1,2}$ the set $(\proj{1}\Sigma \cap \proj{2}\Sigma)$ of
labels occurring on ``both sides'' of a \TTPN. We should also need the
notation $\Sigma_{1}$ for the set
$(\proj{1}\Sigma \setminus \proj{2}\Sigma) \cup \{ \epsilon \}$ of
labels that can occur in $\proj{1}$ concurrently with $\proj{2}$ (and
similarly for $\Sigma_{2}$).  The semantics for \TTPN relies largely
on the semantics of \TPN but makes a particular use of labels.
\begin{definition}\label{def:ptpnstate}
  The semantics of a {\TTPN}
  $\langle {P},{T},{\Pre},{\Post},m_0,\SIF \rangle$, with projections
  $\proj{1}$ and $\proj{2}$, is the TTS
  $\interp{N}_\times = \langle \SG,s_0,\ptrans{}\rangle$ such that
  $s_0 = (m_0, \tI[0])$ is the same initial state than in the \TPN
  semantics $\interp{N}$, and $\ptrans{}$ is the transition relation
  with actions in ${\Sigma \cup \{\epsilon\} \cup \pRat}$ such that:
  \begin{mathpar}
    {\mprset{vskip=0.5ex}
      \inferrule*[]{\alpha \in \pRat \\\\ s \trans{\alpha} s' \in
        \interp{N}} {s \dptrans{\alpha} s'}

      \inferrule*[]{t \in T \quad
        \Lab(t)=\alpha \notin \Sigma_{1,2}\\\\ s \trans{\alpha} s' \in
        \interp{N}} {s \dptrans{\alpha} s'}

      \inferrule*[]{a = \Lab(t_1) = \Lab(t_2) \quad
        t_i \in \proj{i}T\\\\ \proj{i} s \trans{a} \proj{i} s' \in
        \interp{\proj{i}N}\quad i \in 1..2} {s \dptrans{a} s'}}
  \end{mathpar}
\end{definition}

The only new case is for pairs of transitions, $t_1$ and $t_2$ , from
different components but with the same label:
$\Lab(t_1)=\Lab(t_2)=a$. This is the equivalent of a
synchronization. Indeed the premisses entail that both $t_1$ and $t_2$
can fire immediately, and the effect is to fire both of them
simultaneously. As a side effect, our choice of semantics entails that
a transition on a ``shared label'' (in $\Sigma_{1,2}$) is blocked
until we find a matching transition, with the same label, on the
opposite component. This may introduce a new kind of time deadlock
that has no direct equivalent in a \TPN: when a transition has to fire
urgently (hence time cannot progress) while there are no matching
transition that is time-enabled.

It is the case that the reachable states, in $\interp{N}_\times$, are a
subset of the states in $\interp{N}$. This is because we may forbid a
synchronization on a shared label, but never create new opportunities
to fire a transition. We also have a more precise result concerning
the semantics of a \TTPN and the product of its components.
\begin{theorem}
  The TTS $\interp{N}_\times$ is isomorph to the product
  $\interp{\proj{1} N} \mathbin{\|} \interp{\proj{2} N}$.
\end{theorem}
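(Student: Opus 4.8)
The plan is to exhibit an explicit bijection $\Phi$ between the state spaces of $\interp{N}_\times$ and $\interp{\proj{1} N} \mathbin{\|} \interp{\proj{2} N}$ and to check that it is a transition-system isomorphism (preserving the initial state and the labelled transition relation in both directions). The natural candidate is $\Phi(m,\varphi) = (\proj{1}(m,\varphi),\, \proj{2}(m,\varphi))$, i.e. the map that splits a global state into the pair of its projections. First I would argue that $\Phi$ is well-defined and injective: since $\proj{1}P$ and $\proj{2}P$ partition $P$, and $\proj{1}T$ and $\proj{2}T$ partition $T$, a state $(m,\varphi)$ of $N$ is completely and unambiguously determined by the pair $(\proj{1}m,\proj{1}\varphi)$ and $(\proj{2}m,\proj{2}\varphi)$; conversely gluing two such restrictions back together reconstructs $(m,\varphi)$. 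I would then need surjectivity onto the reachable part: every reachable pair on the product side is of the form $(\proj{1}s,\proj{2}s)$ for some reachable $s$ — this follows by induction on the length of a run, using that both sides start from $(\proj{1}s_0,\proj{2}s_0)$ and that each rule of Definition~\ref{def:ttsprod} is mirrored by a rule of Definition~\ref{def:ptpnstate}.

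The heart of the argument is the step-by-step correspondence of transitions, which I would organise by a case split on the kind of action. For a delay $\theta \in \pRat$: on the \TTPN side the first rule of Definition~\ref{def:ptpnstate} says $s \dptrans{\theta} s'$ iff $s \trans{\theta} s'$ in $\interp{N}$; since time elapsing acts pointwise via $\varphi \dotminus \theta$ and the enabling condition $\forall k \in \E(m),\ \theta \le \max\varphi(k)$ decomposes as a conjunction over $\proj{1}T$ and $\proj{2}T$, this matches exactly a synchronous delay step in the product (both components must be able to let $\theta$ pass). For a discrete action $\alpha$ with $\alpha \in \Sigma_1 \cup \{\epsilon\}$ — a label private to component $1$ (or silent) — the middle rule of Definition~\ref{def:ptpnstate} fires a single transition $t \in T$ with $\Lab(t)=\alpha$; because $t$'s pre/post-conditions lie entirely in $\proj{1}P$, firing $t$ leaves $\proj{2}m$ and $\proj{2}\varphi$ untouched, and the persistence bookkeeping ($\varphi'(k)=\varphi(k)$ for persistent $k$, $\SIF(k)$ otherwise) is unaffected on the second component, so this is precisely the first interleaving rule of Definition~\ref{def:ttsprod}; symmetrically for $\Sigma_2$. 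For a shared label $a \in \Sigma_{1,2}$, the third rule of Definition~\ref{def:ptpnstate} requires $t_i \in \proj{i}T$ with $\Lab(t_i)=a$ and $\proj{i}s \trans{a} \proj{i}s'$ in $\interp{\proj{i}N}$ for both $i$, which is word-for-word the synchronisation rule of Definition~\ref{def:ttsprod}; here one must observe that firing $t_1$ and $t_2$ "simultaneously" in $N$ has the same effect as firing them in the two components separately, again because their token flows and newly-enabled sets are confined to disjoint halves of the net.

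The main obstacle I anticipate is not any single case but the careful verification that the \emph{firing-domain update commutes with projection} in the synchronised case: when we fire $t_1$ and $t_2$ together in $\interp{N}_\times$, a transition $k \in \proj{1}T$ is persistent in $N$ iff $m - \Pre(t_1) - \Pre(t_2) \dotgeq \Pre(k)$, and one must check this coincides with persistence of $k$ in $\proj{1}N$ after firing $t_1$ alone, i.e. $\proj{1}m - \Pre(t_1) \dotgeq \Pre(k)$ — which holds because $\Pre(t_2)$ and $\Pre(k)$ have disjoint supports, so the $\Pre(t_2)$ term contributes nothing to places relevant for $k$. Once this locality-of-persistence observation is nailed down, it propagates to show $\proj{i}\varphi' $ computed in $N$ equals $\varphi'$ computed in $\proj{i}N$, and the remaining obligations (initial state: $\Phi(s_0) = (\proj{1}s_0,\proj{2}s_0)$, which is immediate since $\tI[0](t)=\SIF(t)$ restricts correctly; and that $\Phi$ reflects transitions, obtained by reading each equivalence from right to left) are routine. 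I would conclude that $\Phi$ is a bijective, transition-preserving and transition-reflecting map sending initial state to initial state, hence an isomorphism of TTS.
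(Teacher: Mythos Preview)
Your proposal is correct and follows essentially the same approach as the paper: the paper's proof is a one-line sketch (``by induction on the shortest path from the initial state \ldots\ and then a case analysis on the possible transitions''), and your plan is a careful elaboration of precisely that induction and case split, including the locality-of-persistence observation needed for the synchronised case.
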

\begin{proof}
  By induction on the shortest path from the initial state, $s_0$, to
  a reachable state $s$ in $\interp{N}_\times$ and then a case
  analysis on the possible transitions from $s$.\qed
\end{proof}

\section{Construction of the State Class Graph for \TTPN}
\label{sec:state-class-abstr}

We give a brief overview of the LSCG construction for a \TTPN
$N = \langle {P},{T},{\Pre},\allowbreak {\Post},m_0,\SIF \rangle$. In
the following, we use the notation $\alpha^s_t$ and $\beta^s_t$ for
the left and right endpoints of interval $\SIF(t)$. For the sake of
simplicity, we only consider inequalities that are non-strict (our
definitions can be extended to the more general case) and assume that
$\beta - \alpha = \infty$ when $\beta$ is infinite.

A \emph{state class} $C$ is a pair $(m, D)$, where $m$ is a marking
and $D$ is a \emph{domain}; a (finite) system of linear inequalities
on the firing dates of transitions enabled at $m$. We will use
variable $x_i$ in $D$ to represent the possible firing time of
transition $t_i$.
In the Linear SCG construction~\cite{BD91,berthomieu1983enumerative},
we build an inductive set of classes $C_\sigma$, where
$\sigma \in T^*$ is a sequence of discrete transitions firable from
the initial state. Intuitively, the class $C_\sigma = (m, D)$ collects
all the states reachable from the initial state by firing schedules of
support sequence $\sigma$. For example, the initial class
$C_{\epsilon}$ is $(m_{0}, D_0)$ where $D_0$ is the domain defined by
the static time constraints in $\tI[0]$, that is:
$\alpha^s_i \le \sI{i} \le \beta^s_i$ for all $t_i$ in
${\cal E}(m_0)$.

The efficiency of the SCG construction relies on several factors: (1)
First, we can restrict to domains $D$ that are \emph{difference
  systems}, that is a sequence of constraints of the form
$\alpha_i \leq x_i \leq \beta_i$ and $x_i - x_j \leq \gamma_{i,j}$,
where each variable in $(x_i)_{t_i \in {\cal E}(m_0)}$ corresponds to an enabled
transition (and $i \neq j$).
(2) Next, we can always put domains in \emph{closure form}, meaning
that each bounds $\alpha, \beta$ and $\gamma$ are the tightest
preserving the solution set of $D$. Hence we can encode $D$ using a
simple vector of values. This data structure, called \emph{Difference
  Bound Matrix} (DBM), is unique to all the domains that have equal
solution set. Hence testing class equivalence is decidable and
efficient.
(3) Finally, if $C_\sigma=(m,D)$ is defined and $t$ is enabled at $m$,
we can incrementally compute the coefficients of the DBM $D'$, the
domain obtained after firing $t$ from $C_\sigma$, from the
coefficients of $D$.

We only consider the new case where we simultaneously fire a pair of
transitions $(t_i, t_j)$ from a class $(m, D)$. We assume that the
resulting marking is $m'$.
First, we need to check that both transitions can eventually
fire. This is the case only if the condition $\gamma_{t,k} \geq 0$ is
true for all $t \in \{i, j\}$ and $k$ enabled at $m$ (with
$k \neq t$). In this case, the resulting domain $D'$ can be obtained
by following a short number of steps, namely:
\begin{enumerate}
\item add the constraints $x_i = x_j$ and $x_i \leq x_k$ to $D$, for
  all $k \notin \{i, j\}$ (since $t_i, t_j$ must fire at the same date
  and before any other enabled transition);
\item introduce new variables $x'_k$ for all transitions enabled in
  $m'$, that will become the variables in $D'$, and add the constraint
  $x'_k = x_k - x_i$ if $t_k$ is persistent or
  $\alpha^s_k \leq x'_k \leq \beta^s_k$ if $t_k$ is newly enabled;
\item eliminate all the variables from $D$ relative to transitions in
  conflict with $t_i, t_j$ and put the resulting system in normal
  form.
\end{enumerate}

Except for step 1 above, with the constraint that $x_i = x_j$, this is
exactly the procedure described in~\cite{BD91} for plain \TPN. When
both transitions $(t_i, t_j)$ can fire, it is possible to completely
eliminate all occurrences of the ``unprimed'' variables $x_k$ in $D'$
and the result is a DBM. Which is exactly what is needed in our case.

We can draw two useful observations from this result. First, we can
follow the same procedure with any number of equality constraints, and
still wind up with a DBM. Therefore it would be possible to fire more
than two transitions simultaneously. Second, we have an indirect proof
that forcing the synchronization of transitions is strictly less
constraining than using priorities (because it is not possible to use
the LSCG construction with priorities), something that was not obvious
initially.

\section{Expressiveness Results}
\label{sec:expressiveness}

It is not obvious that \TTPN add any expressive power compared to
\TPN.  On the one hand, the semantics of a \TTPN $N$ is quite close to
the semantics of its components. In particular,
$\interp{N}_\times = \interp{N}$ when there are no shared labels
($\Sigma_{1,2} = \emptyset$). Moreover, in a \TTPN like in a \TPN, it
is not possible to lose the ability of firing a transition just by
waiting long enough; a behaviour that distinguishes \TPN from TA, or
from \TPN with priorities for instance. On the other hand, \TTPN
introduces new kind of time deadlocks which are affected by time
delays (see our example at the end of Sect.~\ref{sec:TPN}).
Next, we prove that the two models are equally expressive (up-to
$\approx$) when all timing constraints are either infinite or closed
on the right (in which case we say the net is \emph{right-closed}).
\begin{theorem}\label{th:expressiveness-1}
  Given a safe, right-closed \TTPN $N$, we can build a safe,
  composable \TPN $N'$, whose size is linear with respect to $N$, such
  that $\interp{N}_\times \approx \interp{N'}$.
\end{theorem}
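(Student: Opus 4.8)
The plan is to give a structural, transition-by-transition encoding of the PTPN $N$ into a plain TPN $N'$ that is composable, i.e.\ in which every observable transition carries the trivial interval $[0,\infty[$. The key obstacle is the synchronization rule in Definition~\ref{def:ptpnstate}: a transition $t$ with a shared label $a \in \Sigma_{1,2}$ must fire simultaneously with a matching transition on the opposite component, and both sides carry nontrivial timing constraints $\SIF(t)$. The idea is to split each such synchronizing transition $t$ into (a) a \emph{timed, silent} part that measures the elapsed-enabling delay using the local net structure, and (b) an \emph{urgent, observable} part, labelled $a$, with trivial interval, that performs the actual synchronization once the timing has elapsed. Concretely, for each $t \in \proj{i}T$ with $\Lab(t) = a \in \Sigma_{1,2}$ and $\SIF(t) = [l,h]$, I would introduce a fresh ``staging'' place $r_t$ and replace $t$ by: a silent transition $t^{\flat}$ with interval $[l,h]$, same preset as $t$, whose postset is $r_t$ (plus, if $h$ is finite, a mechanism forcing $t^\flat$ to fire at the latest at the upper bound — see below); and an observable transition $t^{\sharp}$ labelled $a$, with interval $[0,0]$ (urgent), whose preset is $r_t$ (together with the remaining tokens of $\Pre(t)$ that were not consumed, which requires using read arcs or duplicating those places — here I use the fact that $N$ is safe so the markings are $0/1$-valued and the bookkeeping is bounded), and whose postset is $\Post(t)$. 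The transitions with non-shared labels and the silent transitions are left untouched (they are already ``composable'' in the relevant sense, since composability only constrains \emph{observable} transitions).

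First I would make precise the handling of the finite upper bound $h$: since a plain TPN cannot ``lose'' the ability to fire $t^\sharp$ by waiting, but the PTPN semantics \emph{can} time-deadlock when $t$ becomes urgent without a partner, the staging place $r_t$ must itself be equipped so that, once a token sits in it, time cannot progress — which is automatic because $t^\sharp$ has interval $[0,0]$ and is thus urgent the instant it becomes enabled. So a token reaching $r_t$ reproduces exactly the ``must fire now or deadlock'' situation of the PTPN. The lower bound $l$ and upper bound $h$ are both enforced by $t^\flat$ carrying interval $[l,h]$: it cannot fire before $l$, and (by right-closedness, $h < \infty$ or $h = \infty$) it is forced to fire by $h$; here is where the hypothesis \emph{right-closed} is essential, because with a right-open bound $[l,h[$ the transition $t^\flat$ could be disabled by waiting, which a plain TPN cannot mimic — hence the restriction in the statement. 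Safety of $N'$ follows because each $r_t$ is marked at most once at a time (it is emptied by the urgent $t^\sharp$ before $t^\flat$ can refill it), and the original places behave as in $N$; linearity of the size is immediate since we add a bounded number of places and transitions (two transitions and one place) per synchronizing transition, and leave everything else intact.

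The core of the proof is then the bisimulation argument. I would exhibit a relation $\mathcal{R}$ between states of $\interp{N}_\times$ and states of $\interp{N'}$: a PTPN state $(m,\varphi)$ is related to the $N'$-state obtained by (i) keeping the marking on the original places, (ii) putting no tokens in any $r_t$, and (iii) translating each firing domain $\varphi(t)$ of a synchronizing transition into the firing domain of the corresponding $t^\flat$ (an equality on intervals, since they share $\SIF(t)$ and the enabling history is preserved by construction) — plus the ``intermediate'' $N'$-states with a token in some $r_t$, which I relate to the same PTPN state in which $t$ is poised to fire urgently, up to the silent steps $s \wtrans{\epsilon} s'$ allowed by the weak transition relation of Definition~\ref{def:bisim}. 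Then I check the two transfer conditions of weak timed bisimulation by case analysis on the transition rules of Definition~\ref{def:ptpnstate}: a delay $\theta$ on the PTPN side is matched by the same delay on $N'$ (the intervals on $t^\flat$ shift identically to those on $t$); a non-shared discrete transition is matched verbatim; and a synchronized pair $(t_1,t_2)$ on label $a$ is matched by first firing the two silent $t_i^\flat$ (possible precisely because $0 \in \varphi(t_i)$ in the PTPN premise translates to $t_i^\flat$ being time-enabled), reaching tokens in $r_{t_1}$ and $r_{t_2}$, and then firing $t_1^\sharp$ then $t_2^\sharp$ (both urgent, both labelled $a$) — which, via the weak-transition bracketing $\wtrans{\epsilon}\trans{a}\wtrans{\epsilon}$, yields a single $\wtrans{a}$ step landing in the related state. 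The converse direction is the main subtlety: an $N'$-execution might fire $t^\flat$ and then let time pass before a partner appears — but this is exactly where the urgency of $t^\sharp$ (interval $[0,0]$) forbids any delay once $r_t$ is marked, so the only continuations from an intermediate state are the matching urgent fires or a time deadlock, mirroring the PTPN. I expect this converse bookkeeping — ensuring that the ``half-fired'' $N'$-states collapse correctly onto PTPN states under $\approx$, and that no spurious behaviour (e.g.\ two $t^\flat$'s racing) is introduced — to be the delicate part; the rest is a routine (if tedious) unwinding of the semantics.
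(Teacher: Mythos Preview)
Your construction has two genuine gaps that make it fail as stated.

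\textbf{First, the net you build is not composable.} The theorem requires $N'$ to be composable, which in this paper means every \emph{observable} transition carries the trivial interval $[0,\infty[$. You give each $t^{\sharp}$ the interval $[0,0]$ and label $a \in \Sigma$, so $N'$ violates the conclusion of the theorem on its face. You cannot simply relax $[0,0]$ to $[0,\infty[$ either, because your correctness argument relies essentially on the urgency of $t^{\sharp}$ to forbid time from elapsing once $r_t$ is marked; with $[0,\infty[$ that argument collapses, and $N'$ would admit arbitrary delays between the two halves of a split transition. Resolving this tension---moving all timing onto silent transitions while keeping the observable ones trivial \emph{and} preserving the firing window---is precisely what the encoding $\TT$ from~\cite{refIPTPN} does, and why the paper reports it needs up to four extra places and transitions per non-trivial labelled transition, not one of each.

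\textbf{Second, you never fuse the two components.} In a plain \TPN there is no label-based synchronization: labels are observations, not rendezvous. So in your $N'$ the transitions $t_1^{\sharp}$ and $t_2^{\sharp}$ fire independently, each producing its own $a$-event. Your bisimulation step ``fire $t_1^{\sharp}$ then $t_2^{\sharp}$ \dots\ yields a single $\wtrans{a}$'' is incorrect: two observable $a$-steps do not collapse to one under the weak relation $\wtrans{}$ (the $\epsilon$-closure absorbs only silent steps). Worse, nothing in your $N'$ prevents $t_1^{\sharp}$ from firing alone when the other component has no matching partner, producing an $a$ that $\interp{N}_\times$ forbids. The paper handles this by first making each component composable via $\TT$, and \emph{then} applying the structural synchronous product $\|$ of Petri nets, which fuses same-labelled transitions into a single transition (with the union of pre- and postsets). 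That fusion is well-defined exactly because the fused transitions all carry $[0,\infty[$. The proof then factors through two short lemmas: (L1) for composable nets, $\interp{N_1 \times N_2}_\times \approx \interp{N_1 \| N_2}$; and (L2) a congruence of $\times$ with respect to $\approx$. This modular route avoids the direct bisimulation bookkeeping you anticipate as ``delicate'' and sidesteps both problems above.
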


For the sake of brevity, we only sketch the proof. We rely on two
auxiliary properties and on an encoding from \TPN into
\emph{composable} net; meaning an equivalent net where all timing
constraints have been ``moved'' to silent transitions. We find such
result in~\cite[Def.~9]{refIPTPN}, which provides a construction to
build a composable net $\TT(N)$ from every safe and right-closed \TPN
$N$. Our restrictions on $N$ in Th.~\ref{th:expressiveness-1} come
from this construction, as is our result on the size of $N'$.

Our first auxiliary property, (L1), compare the product of composable
\TPN with their synchronous product, namely:
if $N_1$ and $N_2$ are composable \TPN then
$\interp{N_1 \times N_2}_\times \approx \interp{N_1 \, \| N_2}$.
Property (L1) derives directly from the construction of the product
$N_1 \, \| N_2$ of composable \TPN. Indeed, with composable nets, the
fusion of transitions sharing a common label are unaffected by
continuous transitions. Hence they have the same behaviour in
$N_1 \times N_2$ than in $N_1 \| N_2$. (And this is the only place
where the semantics of the two nets may diverge.)

Next, we use an equivalent of the congruence property for \TTPN, (L2):
given two pairs of \TPN $(N_1, N_2)$ and $(M_1, M_2)$ such
$N_1 \approx N_2$ and $M_1 \approx M_2$ we have that
$\interp{N_1 \times M_1}_\times \approx \interp{N_2 \times
  M_2}_\times$. Property (L2) can be proved by defining a ``candidate
relation'', $\arel$, which contains the pair $(s_0, s'_0)$ of initial
states of ${N_1 \times M_1}$ and ${N_2 \times M_2}$; then proving that
$\arel$ is a weak timed bisimulation. A suitable choice for $\arel$
is to take the smallest relation such that
$(s_1 \uplus s'_1) \arel (s_2 \uplus s'_2)$ whenever $s_1 \approx s_2$
and $s'_1 \approx s'_2$. Then the proof follows by simple case
analysis.

Finally, we use construction $\TT$ (above) to build composable \TPN
from the nets $\proj{1}N$ and $\proj{2}N$ and to define
$N' \eqdef \TT(\proj{1}N) \| \TT(\proj{2}N)$.
By property of $\TT$ we have $\proj{i}N \approx \TT(\proj{i}N)$ for
all $i \in 1..2$. Hence by (L2) and (L1) we have
$\interp{N}_\times = \interp{\proj{1}N \times \proj{2}N}_\times
\approx \interp{\TT(\proj{1}N) \times \TT(\proj{2}N)}_\times \approx
\interp{\TT(\proj{1}N) \mathbin{\|} \TT(\proj{2}N)}$. The property
follows by transitivity of $\approx$.

Our proof gives a constructive method to build a net $N'$ with (at
most) four extra transitions and places, compared to $N$, for each
non-trivial labeled transition. We can use the SCG of $N'$ to compute
the language of $N$ (and to compute the intersection of two nets when
we choose $N = N_1 \times N_2$). Unfortunately this approach does not
scale well. For example, the composition of the two nets given in
Fig.~\ref{Figure-exeTPN} has 16 classes with this method instead of
only $3$ with our approach (and the intermediary \TPN has 11 places
and 7 transitions). Likewise, for the simple example in
Fig.~\ref{Figure-image3} we have a net with 25 places, 211 transitions
and $1\,389$ classes instead of simply $3$ classes with \TTPN.

Another limitation of this approach are the restrictions imposed on the
timing constraints of $N$. Indeed, to the best of our knowledge, there
are no equivalent of construction $\TT$ in the case of ``right-open''
transitions.\\

\noindent\textbf{Composable Time Petri nets using IPTPN.}
Berthomieu et al.~\cite{refIPTPN} define an extension of \TPN with
``inhibition and permission'' that provides another method for
building composable nets. With this extension, it is always possible
to build a composable IPTPN from a \TPN. For example,
Fig.~\ref{fig:iptpn} displays the IPTPN corresponding to the
``product'' of the two nets in Fig.~\ref{Figure-exeTPN}. In this
construction, we create a silent, extra-transition $tc_i$ for every
non-trivial observable transition $t_i$. These transitions cannot fire
(they self-inhibit themselves with an
\tikz{\node[circle,inner sep=0pt](p0) at (0,0){};
  \node[circle,draw=orange,inner sep=0pt,minimum size=4pt,thick](p1)
  at (4ex, 0){}; \draw[orange, thick](p0) -- (p1);}\/ arc)
but ``record the timing constraints'' of the transition they are
associated with. Then a permission arc (\/%
\tikz{\node[circle,inner sep=0pt](p0) at (0,0){};
  \node[circle,draw=orange,fill=orange, inner sep=0pt,minimum size=4pt,thick](p1) at
  (4ex, 0){}; \draw[orange, thick](p0) -- (p1);}\/)
is used to transfer these constraints on the (product of) labeled
transitions.

Tina provides a SCG construction for IPTPN but, like with the addition
of priorities, it is necessary to use the {strong} construction in
this case. We use the encoding into IPTPN we just sketched above in
our experiments.

\begin{figure}[t]
\centering
\begin{minipage}{.45\textwidth}
  \centering
  \includegraphics[scale=0.55]{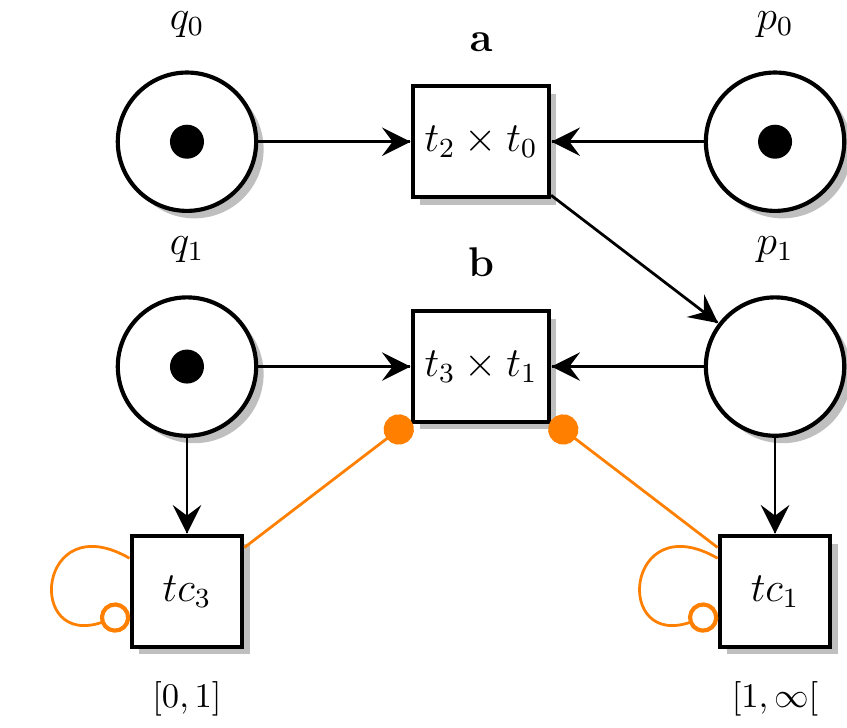}
  \captionof{figure}{Example of IPTPN}
  \label{fig:iptpn}
\end{minipage}%
\hfill\vrule\hfill%
\begin{minipage}{.45\textwidth}
  \centering
  \includegraphics[scale=0.55]{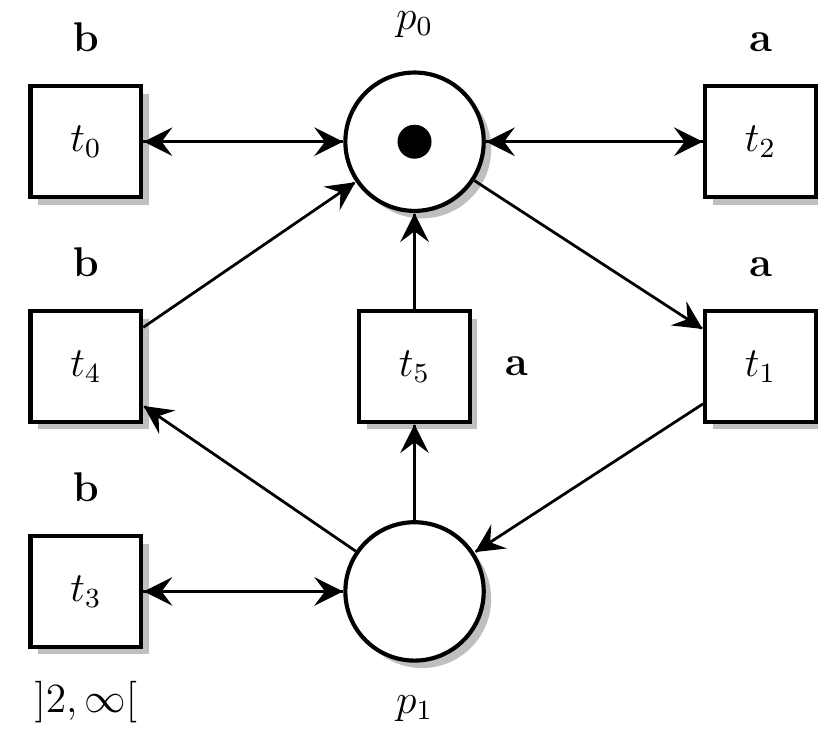}
  \captionof{figure}{\TPN for the delay property}
  \label{fig:obs}
\end{minipage}
\end{figure}

\section{Experimental Results and Possible Applications}
\label{section:Twin}
\label{sec:experimental-results}

We have implemented the state class construction for \TTPN in a tool
called Twina~\cite{twina2019} that can generate the LSCG of both
``plain'' and product \TPN. The tool and models mentioned here are
available online at \url{https://projects.laas.fr/twina/}, with
instructions on how to reproduce our results.\\

\noindent\textbf{Performances Compared with IPTPN}
We compare the results obtained with \TTPN and an encoding into IPTPN,
which appears to be the best alternative among the three methods
mentioned in Sect.~\ref{sec:introduction}.  By default, Twina uses
option {-W}, that computes the Linear SCG of a net. We also provide
option {-I} to compute the LSCG for the product of two nets using the
construction defined in Sect.~\ref{sec:state-class-abstr}.  We use the
same syntax for nets in Twina than in Tina~\cite{BRV04}. In
particular, our method can be used with nets that are not 1-safe and
without any restriction on the timing constraints (so we accept
right-open transitions). We also allow read- and inhibitor-arcs with
the same semantics than in Tina.
We compare the size of the LSCG
with the results obtained using IPTPN and Tina in
Table~\ref{tab:1}. The results are reported with the sizes of the SCG
in number of classes and edges; we also give the ratio of classes
saved between the LSCG and the SSCG. So a $100\%$ ratio means twice as
much states in the strong SCG.

We use different models for our benchmarks: \textit{jdeds} is an
example taken from~\cite{gougam2017diagnosability} extended with time;
\textit{train} is a modified version of the train controller example
in~\cite{Ber03b} with an additional transition that corresponds to a
fault in the gate;
\noindent\textit{plant} is
the model of a complex automated manufacturing system
from~\cite{wang2015diagnosis}; \textit{wodes}~is the WODES diagnosis benchmark of Giua (see
e.g.~\cite{cabasino2009discrete}) with added timed constraints.
For each model, we give the result of three experiments:
\textit{plain} where we compute the SCG of the net, alone;
\textit{twin} where we compute the intersection between the \TPN and a
copy of itself with some transitions removed; and \textit{obs} where
we compute the intersection of the net with a copy of the \TPN in
Fig.~\ref{fig:obs}. We explain the relevance of the last two
constructions just afterwards.

{\setlength{\tabcolsep}{4pt} 
  \begin{table}[t]
    \centering

    \pgfplotstableset{
      col sep=comma,
      columns/model/.style={
        string type,
        column type=|l,
        column name=\textsc{Model}},
      columns/exp/.style={
        string type,
        column type=l|,
        column name=\textsc{Exp.}},
      columns/statel/.style={
        numeric type,
        fixed,
        column type=|r|,
        column name=\textsc{States}},
      columns/transl/.style={
        numeric type,
        fixed,
        column type=r|,
        column name=\textsc{Trans.}},
      columns/states/.style={
        numeric type,
        fixed,
        column type=|r|,
        column name=\textsc{States}},
      columns/transs/.style={
        numeric type,
        fixed,
        column type=r,
        column name=\textsc{Trans.}},
      create on use/ratio/.style={
        create col/expr={(\thisrow{states}-\thisrow{statel})*100/\thisrow{statel}}
      },
      columns/ratio/.style={
        column type=||r|,
        precision=0,
        postproc cell content/.append code={
          \pgfkeysalso{@cell content/.add={}{\%}}%
          % \ifnum1=\pgfplotstablepartno
          % \pgfkeysalso{@cell content/.add={}{\%}}%
          % \fi
        },
        fixed,
        fixed zerofill,
        column name=\textsc{Ratio}},
      %% HEAD
      every head row/.style={
        before row={
          \hline%\toprule
          \multicolumn{2}{|c||}{} &
          \multicolumn{2}{c||}{Twina (LSCG)} &
          \multicolumn{2}{c||}{IPTPN (SSCG)} &
          \multicolumn{1}{c|}{}\\ 
          \cline{3-6}
        }, 
        after row={
          \hline
          % \midrule
        }
      },
      %% LAST
      every last row/.style={after row=\hline}, % \bottomrule
      every nth row={3}{before row=\hline},
    }

    \pgfplotstableread{data.csv}\loadedtable

    \pgfplotstabletypeset[
    columns={model,exp,statel,transl,states,transs,ratio},
    1000 sep={\,},
    ]\loadedtable\\[1em]
    \caption{Comparing the \TTPN and IPTPN methods}
    \label{tab:1}
  \end{table}
}

We see that, in some of our examples, there is a large difference
between the size of the LSCG and the size of the SSCG for the same
example. This was one of our main reason for developing a specific
tool. This is important since, on the extreme case, we can have a
quadratic blow-up in the number of classes when analysing a twin
product. (This is almost the case in example {jdeds}.) We also observe
that, on model {plant-twin}, the size of the intersection may be much
smaller than the size of one of the component alone; $1300$ classes
compared to $2$ million. This is to be expected, since the
intersection may have only one class. Nonetheless this emphasizes the
need to have methods that can build the intersection on the fly,
without computing a symbolic
representation for each component first.\\

\noindent\textbf{Diagnosability and the Twin Plant Method.}
One possible application of Twina---and our initial motivation for
this work---is to check \emph{fault
  diagnosability}~\cite{sampath1995diagnosability} in systems modelled
as \TPN~\cite{basile2017,Cabasino12}.  In this context, a system is
described as a \TPN with a distinguished unobservable event $f$ that
models a fault.  (Any transition labelled with $f$ is faulty.)  Fault
$f$ is diagnosable if it is always possible to detect when a faulty
transition has fired, in a finite amount of time, just by looking at
the observable flow of events~\cite{Tripakis02}. Under the assumptions
that the system does not generate Zeno executions, and that any
possible execution is not infinitely unobservable, one way to check
diagnosability is to look for infinite \emph{critical
  pairs}~\cite{cimatti_2003_refid132}. A critical pair consists of a
couple of infinite executions of the \TPN, one faulty the other one
not, that have equal timed observations. Then fault $f$ is diagnosable
if no such pair exists. By using Twina, we aim at checking
diagnosability by adapting the \emph{twin-plant}
method~\cite{refTwin:poly} to \TPN.  The idea is to make two copies of
the same system, one with the fault, $N_f$, and the other without it,
$N_o$, and to relabel all unobservable events to avoid collisions.
Then checking for the existence of an infinite critical pair amounts
to finding an infinite execution with $f$ in the product
$N_o \times N_f$.
\begin{figure}[t]
   \begin{center}
\raisebox{-0.5\height}{\subfloat{\includegraphics[scale=0.5]{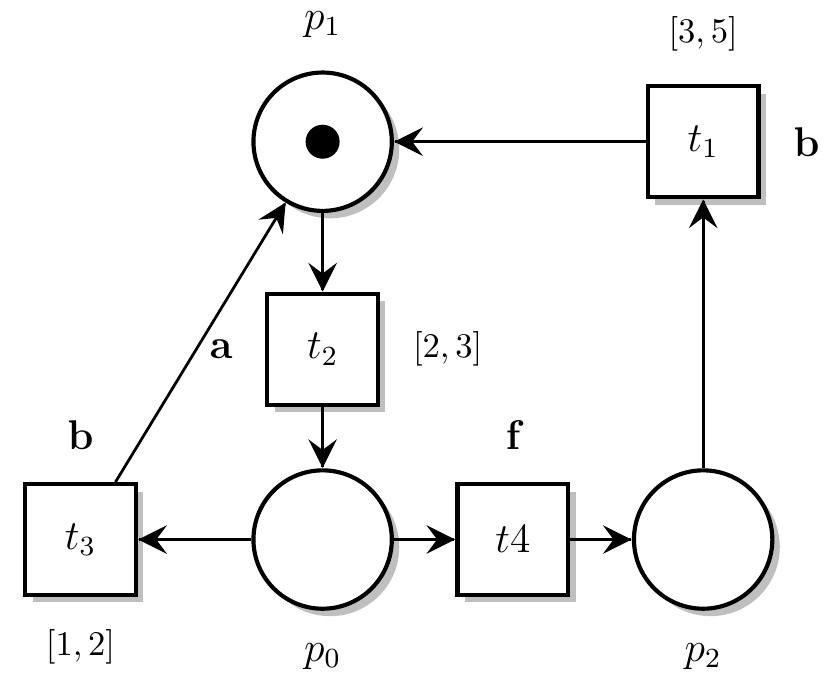}}}
     \qquad\scalebox{2}{$\times$}\qquad
\raisebox{-0.5\height}{\subfloat{\includegraphics[scale=0.5]{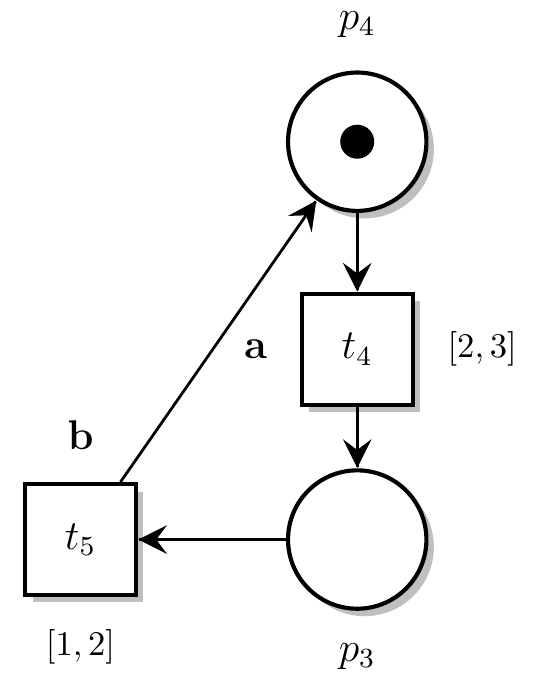}}}
   \end{center}
   \caption{Product of a \TPN $N_f$ (left) and its ``twin'' $N_o$
     (right) for the fault ${f}$.}
   \label{Figure-image3}
\end{figure}

We give an example of this construct in Fig.~\ref{Figure-image3} where
$a$ and $b$ are both observable.
In system $N_f$, fault $f$ is not diagnosable if we do not consider
time, as we always observe $b$ after an $a$ in both faulty and
non-faulty executions. Now considering the observation of time, $f$ is
diagnosable as the date of $b$ is always discriminant. In the
intersection of $N_f$ and $N_o$, every execution where transition
$t_4$ fires leads to a time deadlock. Indeed, in this case, we must
wait at least $3$ to fire transitions $t_1$ and at most $2$ to fire
$t_5$ (and both have label $b$).

The twin-plant construction is quite useful and we provide an option
to directly build a twin \TPN in our tool (option {-twin}). This is
the construction we use in our experiments for Table~\ref{tab:1}. In
this case, we can generate a LTS for the twin plant and check that
every fault eventually leads to a deadlock in the product, meaning that 
the system is diagnosable. For instance using a LTL
model-checker and a property such as
$(\lozenge f) \Rightarrow (\lozenge
\mathrm{dead})$~\cite{gougam2017diagnosability}. We also provide a
dedicated algorithm (option {-diag}) to check this property
on-the-fly. When the system is not diagnosable, it allows us to find a
counter-example before exploring the whole behaviour of the
twin-plant.\\

\noindent\textbf{Observer-based verification.} Another
application of our product construction is model checking \TPN, in
much the same way some ``observer-based'' verification techniques rely
on the product of a system with an
observer~\cite{DalzilioS:IJCCBS2014verifpatterns,toussaint1997time}. The
idea is to express a property as the language of an observer, $O$,
then check the property on the system $N$ by looking at the behaviour
of $N \times O$. A major advantage of this approach is that there is
no risk to disrupt the system under observation, which is not always
easy to prove with other methods.

We give an example of observer in Fig.~\ref{fig:obs}. In this net,
sequences of events $a$ and $b$ may occur in any order and at any
date. On the other hand, the only way to fire $t_3$ is to ``find'' two
successive occurrences of $a$ and $b$ with a delay (strictly) bigger
than $2$. Hence we can check if such behaviour is possible in a
system, $N$, by checking whether $t_3$ can fire in $N \times O$. This
is the problem we consider in the \textit{obs} experiments of
Table~\ref{tab:1}. We only consider one small example
here. Nonetheless, the same approach could be used to check more
complex timed properties. This will be the subject of future works.

\section{Conclusion}

We propose an extension of \TPN with a product operation in the style
of Arnold-Nivat. The semantics of our extension is quite
straightforward. What is more surprising is that it is possible to
adapt the LSCG construction to this case---which means that we do not
need the equivalent of clocks or priorities---and that this extension
does not add any expressive power. This is a rather promising result,
complexity-wise, since it means that we can hope to adapt the same
optimization techniques than with ``plain'' \TPN, such as specific
symmetry reduction techniques for
instance~\cite{DalzilioS:scp2016symmetries}.

We have several opportunities for extending our work. Obviously we can
easily extend our product to a sequence of nets and add a notion of
``synchronization vectors''. This could lead to a more compositional
framework for \TPN, in the style of the BIP
language~\cite{basu:hal-00375298}. Another promising application of
our approach would be to extend classical results from the theory of
supervisory control to the context of \TPN. We already mentioned a
possible application for diagnosability (which was the initial
motivation for our work). A next step could be to study the
``quotient'' of two \TPN language---the dual of the product---which
can be used to reason about the controlability of a system and that is
at the basis of many
compositional verification methods, such as Assume-Guarantee for example.\\%%~\cite{10.1007/978-3-642-28872-2_3}.\\

\noindent\textbf{Acknowledgments.} The authors are grateful to Thomas
Hujsa and Pierre-Emmanuel Hladik for their valuable comments. We also
want to thank Bernard Berthomieu, without whom none of this would have
been possible; our work is a tribute to the versatility and the
enduring qualities of the state class construction that he pioneered
more than 30 years ago.

\bibliographystyle{splncs04}
\bibliography{bibli}

\end{document}